\documentclass[conference]{IEEEtran}
\IEEEoverridecommandlockouts

\usepackage{cite}
\usepackage{amsmath,amssymb,amsfonts} 
\usepackage{algorithm}
\usepackage{algpseudocode}
\usepackage{graphicx}
\usepackage{textcomp}
\usepackage{xcolor}  
\usepackage{booktabs}
\usepackage{caption}
\usepackage{array}     
\usepackage{multirow}
\usepackage{makecell} 
\usepackage{verbatim}
\usepackage{url} 
\usepackage{ulem}
\usepackage[caption=false,font=normalsize,labelfont=sf,textfont=sf]{subfig}

\newtheorem{theorem}{Theorem}  

\def\BibTeX{{\rm B\kern-.05em{\sc i\kern-.025em b}\kern-.08em
    T\kern-.1667em\lower.7ex\hbox{E}\kern-.125emX}}
\begin{document}

\title{{MARL-Based Sequential RIS Auctions: A Physical-Layer Security Analysis}
}

\author{Yuanyu Zhang,~\IEEEmembership{Member,~IEEE},
	Yu~Zhang, 
	Jialu~He,
	Zhixin~Huang,
	Shuangrui~Zhao,~\IEEEmembership{Member,~IEEE},\\
	and Yulong~Shen,~\IEEEmembership{Member,~IEEE}
}

\maketitle

\begin{abstract}
Reconfigurable intelligent surfaces (RISs) hold great potential to enhance coverage, spectral efficiency, and communication security by intelligently configuring their reflecting elements. When owned by a neutral RIS operator, these elements can be offered as resources for which legitimate receivers and eavesdroppers compete. This paper investigates such competition and evaluates its impact on the physical-layer security performance of legitimate receivers. To model the competition, we develop a sequential RIS auction (SRA) framework, in which a bundle of RIS elements is auctioned in each round through a first-price sealed-bid mechanism, with each bidder submitting its bid based on the achievable rate gain and remaining budget. We then formulate the sequential bidding process as a Markov game by specifying its states, actions, rewards, and state transitions. To solve the game, we propose a multi-bidder deep deterministic policy gradient (MADDPG)-based multi-bidder reinforcement learning (MARL) approach under centralized training and decentralized execution (CTDE), enabling legitimate receivers and eavesdroppers to learn bidding strategies that maximize their long-term economic surplus. Numerical results show that, under the considered eavesdropper bidding strategies, the RL-based strategy enables legitimate receivers to achieve the highest secrecy rate per unit cost, outperforming random and fixed strategies and approaching the ideal physical-layer upper bound.
\end{abstract}

\begin{IEEEkeywords}
Multi-bidder reinforcement learning, RIS, Physical layer security, Auction theory.
\end{IEEEkeywords}

\section{Introduction}

Reconfigurable intelligent surfaces (RISs) can intelligently reshape wireless propagation by configuring their reflecting elements, thereby improving coverage, spectral efficiency, and communication reliability. These capabilities make RISs particularly promising for sixth-generation (6G) networks, which face increasingly complex security threats due to open access, heterogeneous interconnection, and intelligent deployment. Accordingly, RISs have been widely investigated as an enabler of physical-layer security (PLS) to enhance legitimate transmission and suppress information leakage~\cite{Asif2024,Zhang2024,Khoshafa2024,Li2025,Chen2026b,Yang2026a,Salman2026a,Deraz2026,Rasheed2026}. However, when an RIS is owned by a neutral operator and its reflecting elements are offered as scarce resources, both legitimate receivers and eavesdroppers may compete for them. In this case, RIS-assisted security depends not only on resource configuration, but also on the RIS resource allocation between participants with conflicting security objectives.


Recent studies have introduced market-based mechanisms for leasing RIS resources to multiple users through pricing, game-theoretic, and auction-based approaches~\cite{Gao2022,Cai2023,Pan2024,Schwarz2024,Zan2026a,Tang2026a,Tang2026b}. These studies generally assume that all bidders are legitimate users. In practice, however, a neutral RIS provider may be unable to identify a bidder's true security intent, allowing an eavesdropper to participate by posing as a legitimate service requester. Moreover, existing market-based studies mainly evaluate communication performance and economic efficiency, without examining how competition between legitimate receivers and eavesdroppers affects PLS. Therefore, the security implications of market-based RIS resource allocation remain largely unexplored.

To address these gaps, this paper investigates the competition between a legitimate receiver and an eavesdropper for RIS resources provided by a neutral operator. We develop a sequential RIS auction (SRA) framework to characterize their strategic interactions and analyze how auction outcomes affect the PLS performance of the legitimate receiver. Since bidding decisions are coupled across auction rounds through resource allocation and budget consumption, we further model the competition as a Markov game and develop a MARL-based approach to learn long-term bidding strategies. The main contributions are summarized as follows:

\begin{itemize}
\item We introduce a market-based RIS threat model in which an eavesdropper may pose as a legitimate requester and compete for RIS resources provided by a neutral operator. Based on this model, we develop an SRA framework that divides the RIS into bundles of reflecting elements and allocates them round by round through first-price sealed-bid auctions.

\item We formulate the sequential bidding process as a budget-coupled Markov game by defining its states, actions, rewards, and state transitions. To solve the game, we develop a MADDPG-based MARL approach under centralized training and decentralized execution, enabling bidders to learn long-term, budget-aware strategies based on achievable rate gains and remaining budgets.

\item We evaluate the impact of auction competition on the PLS performance of the legitimate receiver. Numerical results show that, under the considered eavesdropper bidding strategies, the learned policy achieves the highest secrecy rate per unit bid among the compared methods, outperforming random and fixed strategies and approaching the ideal physical-layer upper bound.
\end{itemize}
 
The organization of the remainder of this paper is as follows. 
Section~\ref{sec_sys} presents the system model and threat model. 
Section~\ref{sec_SRA} describes the SRA framework. 
Section~\ref{sec_mecha} develops the MARL-based bidding strategy. 
Section~\ref{sec_simul} presents the numerical evaluation results and corresponding analysis.
Section~\ref{sec_con} concludes this paper.

\section{System and Threat Model}\label{sec_sys}

\subsection{Network Model}\label{subsection_sys_net}

We consider an auction-driven RIS-assisted network, which consists of a legitimate transmitter (Alice), a passive RIS relay, an intended receiver (Bob), and a potential eavesdropper (Eve), as illustrated in Fig. \ref{fig:SA_system1}. 
In contrast to the conventional cooperative RIS, the RIS in this paper is operated by an independent and market-oriented service provider whose objective is to offer reflection resources through market-based allocation.

To enable flexible and fine-grained resource allocation, the RIS equipped with $N$ reflecting elements is partitioned into $M$ disjoint modules, each comprising $L=N/M$ reflecting elements.
Each RIS module is treated as an indivisible economic resource and is allocated sequentially to competing bidders.
Once a module is allocated, the rights of ownership and control for that module are exclusively assigned to the winning bidder for the duration of the remaining transmission episode.
Bob and Eve are admitted bidders in the RIS resource market, and they compete for the temporary phase-control rights of each RIS resource module.
All nodes are located on a two-dimensional Cartesian plane with coordinates $\mathbf{x}_A, \mathbf{x}_B, \mathbf{x}_E$, and $\mathbf{x}_R \in \mathbb{R}^2$. 
The distance between node $i$ and node $j$ is denoted by $d_{ij}=\|\mathbf{x}_i-\mathbf{x}_j\|$, $i,j\in\{A,B,E,R\}$.
Since Bob and Eve are located at different positions, they experience different channel conditions and therefore obtain different marginal performance gains from additional RIS modules.
These differences motivate strategic competition for RIS reflection resource under budget constraints.

\begin{figure}[t]
	\centering
	\includegraphics[width=1.0\linewidth]{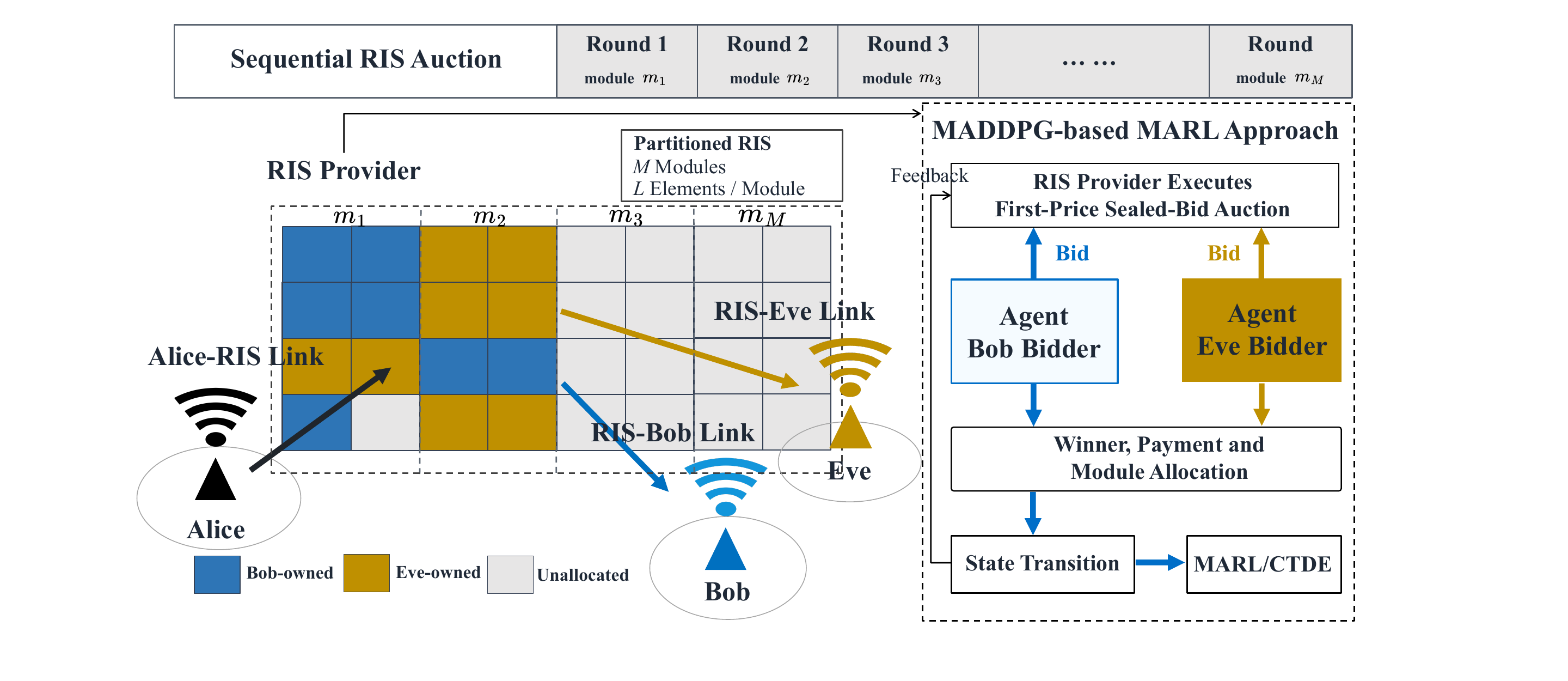}
	\caption{An auction-driven RIS-assisted network.}
	\label{fig:SA_system1}
	\vspace{-20pt}
\end{figure}

\subsection{Channel Model}\label{subsection_sys_ch}

We consider a narrowband quasi-static block-fading channel model, in which channel coefficients remain constant within one transmission episode and vary independently across episodes. 
Thus, the complex baseband channel between Alice and RIS is denoted by $\mathbf{h}_{AR}\in\mathbb{C}^{N\times 1}$ and is modeled as
\begin{equation}
	\mathbf{h}_{AR}=\sqrt{C_0 d_{AR}^{-\alpha}}\,\mathbf{g}_{AR},
\end{equation}
where $C_0$ denotes the path loss at a reference distance, $\alpha$ represents the path-loss exponent, and $\mathbf{g}_{AR}$ follows a Rician fading distribution to capture both line-of-sight (LoS) and non-line-of-sight (NLoS) components. 
Similarly, the channels from the RIS to Bob and Eve are given by
\begin{equation}
	\mathbf{h}_{RB}=\sqrt{C_0 d_{RB}^{-\alpha}}\,\mathbf{g}_{RB},\quad
	\mathbf{h}_{RE}=\sqrt{C_0 d_{RE}^{-\alpha}}\,\mathbf{g}_{RE}.
\end{equation}

Let $\mathcal{M}=\{1,2,\ldots,M\}$ denote the set of RIS modules.
For each module $m\in\mathcal{M}$, the phase-shift matrix is defined as
$\mathbf{\Phi}_m = \text{diag}(\boldsymbol{\phi}_m) \in \mathbb{C}^{L \times L}$, 
where  
$\boldsymbol{\phi}_m = [e^{j\theta_{m,1}}, \dots, e^{j\theta_{m,L}}]^T \in \mathbb{C}^{L \times 1}$  is the reflection coefficient vector, and  $\theta_{m,\ell} \in [0, 2\pi)$ is the phase shift of the $\ell$-th element in RIS module $m$.
Since the $M$ modules are mutually disjoint, the overall RIS phase-shift matrix can be constructed by stacking the module-level phase-shift matrices in a block-diagonal form, i.e., $\boldsymbol{\Phi} = \text{diag}(\boldsymbol{\phi}_1, \dots, \boldsymbol{\phi}_M) \in \mathbb{C}^{N \times N}$.   

Moreover, the direct links from Alice to Bob and Eve are assumed to be blocked or severely attenuated to facilitate the evaluation of the impact of RIS resource allocation on secrecy performance, which has been commonly adopted in dense urban environments.  
Let $\mathcal{S}_i\subseteq\mathcal{M}$ denote the set of modules acquired by bidder $i\in\{B,E\}$, the effective cascaded channel from Alice to bidder $i$ via the allocated RIS modules is expressed as
\begin{equation}\label{eq_hAI}
	h_{Ai}^{\text{RIS}} = \sum_{m \in \mathcal{S}_i} \mathbf{h}_{Ri,m}^H \mathbf{\Phi}_m \mathbf{h}_{AR,m},
\end{equation}
where $\mathbf{h}_{Ri,m} \in \mathbb{C}^{L \times 1}$ and $\mathbf{h}_{AR,m} \in \mathbb{C}^{L \times 1}$ denote the corresponding channels of RIS module $m$.
We assume a quasi-static channel state information (CSI) model at the episode level, where CSI is updated at the beginning of each episode and remains fixed during the auction rounds. Each bidder uses its own CSI to compute the private valuation, and neither the CSI nor the valuation is disclosed to the opponent.

\subsection{Physical-Layer Security Model}\label{subsection_sys_pls}

Based on the channel model, the received signals at bidder $i\in\{B,E\}$ is
\begin{equation}
	y_i=\sqrt{P}\,(h_{Ai}^{\mathrm{RIS}}+h_{Ai}^{\mathrm{dir}})\,x+n_i,  
\end{equation}
where $h_{Ai}^{\mathrm{RIS}}$ is the Alice-RIS-$i$ cascaded channel defined in (\ref{eq_hAI}), and $h_{Ai}^{\mathrm{dir}}$ is the Alice-$i$ direct channel. 
The term $P$ is the transmit power, $x$ is the information-bearing symbol with $\mathbb{E}[|x|^2]=1$, and $n_i\sim\mathcal{CN}(0,\sigma^2)$  is the AWGN at receiver $i$. The corresponding signal-to-noise ratios (SNR) are
\begin{equation}
	\gamma_i=\frac{P\,|h_{Ai}^{\mathrm{RIS}}+h_{Ai}^{\mathrm{dir}}|^2}{\sigma^2}.
\end{equation} 
Accordingly, the achievable communication rates are 
\begin{equation}\label{eq_Ri}
	R_i=\log_2(1+\gamma_i).
\end{equation}
The instantaneous secrecy rate is defined as $	R_s= [R_B -R_E  ]^+$, 
where $[\cdot]^+=\max(\cdot,0)$.

\subsection{Threat Model}\label{subsection_sys_threat}

Eve remains passive during wireless signal transmission but actively participates in the RIS resource market to improve its interception capability. 
Specifically, Eve does not inject active wireless interference, but it acquires RIS modules through bidding and configures the won modules to strengthen the Alice-Eve cascaded link. 
The RIS provider performs registration, billing, and contract enforcement for authorized bidders. 
It therefore knows each bidder's authenticated service identity and can prevent anonymous bidding and budget violations, but cannot reliably infer from a service request alone whether a registered bidder is an intended receiver or a potential eavesdropper. 
Similarly, Bob cannot perfectly identify malicious bidders during the sealed-bid auction and observes only limited auction feedback, such as previous winning prices and its own allocation outcomes. 
This information asymmetry creates a market-enabled eavesdropping risk, in which Eve acquires RIS control rights through authorized market participation before transmission and exploits them for passive interception during transmission.

\section{Sequential RIS Auction (SRA) Framework and Physical-Layer Security Analysis}\label{sec_SRA}

In this section, we develop the SRA framework by specifying the auction setup and RIS module valuation model, and further characterize the physical-layer security implications of RIS module allocation.

\subsection{Auction Setup }\label{subsec:auction_setup}

We consider a market-oriented RIS deployment, where a dedicated RIS provider supplies the $M$ RIS modules and allocates them to competing bidders through a sequential first-price sealed-bid auction \cite{Sun2006}.
Under this framework, the bidder that acquires module $m$ obtains the temporary phase-control right of this module. This setting reflects practical RIS scenarios, in which reflection resources are economically priced and competitively allocated. Bob aims to enhance the legitimate link, whereas Eve seeks to strengthen its interception capability.
At auction round $t\in\{1,\ldots,T_a\}$, where $T_a=M$ is the number of auction rounds with one module allocated per round, the provider offers one module $m$ to two strategic bidders: Bob and Eve.
Each bidder $i\in\{B,E\}$ submits a bid $b_{i,m_t}(t)$ for the current module. After allocation, the auction proceeds to the next round with updated budgets and allocation sets.

Since each module can be allocated to at most one bidder, the allocation sets satisfy
$\mathcal{S}_B(t)\cap\mathcal{S}_E(t)=\emptyset$. 
The allocation sets evolve according to
\begin{equation}
	\mathcal{S}_i(t+1)=
	\begin{cases}
		\mathcal{S}_i(t)\cup\{m\}, & \mathbb{I}_i(t)=1,\\
		\mathcal{S}_i(t), & \mathbb{I}_i(t)=0.
	\end{cases}, \forall t,
\end{equation}
where $\mathbb{I}_i(t)=1$ if bidder $i$ wins module $m$ at round $t$, and $0$ otherwise. After all auction rounds are completed, the final allocation set is given by $\mathcal{S}_i=\mathcal{S}_i(T_a+1)$.

\subsection{Module Valuation and Bidding Rule}\label{subsec:valuation_model} 

To connect physical-layer performance with economic decision-making, we introduce a conversion factor $\lambda>0$, which maps the communication rate improvement into monetary value. Based on~\eqref{eq_Ri}, the valuation of bidder $i$ for the module $m$ is defined as
\begin{equation}
	v_{i,m}(t)
	=\lambda\!\left[R_i\big(\mathcal{S}_i(t)\cup\{m\}\big)-R_i\big(\mathcal{S}_i(t)\big)\right],
	\label{eq:valuation}
\end{equation}
which represents the marginal economic value of acquiring module $m$.
For both Bob and Eve, $R_i(\cdot)$ denotes the achievable communication rate under the current allocation. 
The noise power is treated as a fixed system parameter during the auction-stage optimization, so the private valuation depends only on the bidder's own CSI and allocation state.
Thus, the valuation captures the rate gain enabled by acquiring the RIS module.
Since the marginal rate gain depends on the modules obtained in previous rounds, the valuation is state-dependent and evolves throughout the auction process.

At the beginning of round $t$, bidder $i$ has a remaining budget $B_i(t)$. 
For Bob, the budget represents the payment capacity allocated to enhancing the legitimate link, whereas for Eve, it represents the payment capacity allocated to improving its interception capability.
The budget constraint prevents either bidder from acquiring RIS modules without considering the impact of current payments on future auction opportunities.

Bidders are not assumed to follow truthful bidding behavior. 
Instead, each bidder adopts a bidding strategy  that depends on the valuation and the remaining budget. 
Specifically, the bid of bidder $i$ is modeled as
\begin{equation}\label{eq:bid}
	b_{i,m}(t)=\min\!\left\{a_i(t)\,v_{i,m}(t),\,B_i(t)\right\},
\end{equation}
where $a_i(t)\in[0,1]$ denotes a  bidding strategy and $B_i(t)$ denotes the remaining budget at the beginning of round $t$. 
This parametric form captures the bid-shading behavior inherent in first-price auctions while maintaining a low-dimensional and tractable action space for learning.

The module is allocated to the highest bidder, and the winner pays the submitted bid according to the first-price payment rule. 
The budget of each bidder evolves as
\begin{equation}
	B_i(t+1)=B_i(t)-\mathbb{I}_i(t)\, b_{i,m}(t).
\end{equation}
This payment mechanism induces a trade-off between winning probability and payment cost, leading to forward-looking strategic behavior under budget constraints.

\subsection{Bidding Problem Formulation}\label{subsec:problem_formulation}

Because the secrecy rate depends not only on the module currently allocated but also on the cumulative allocation set $\mathcal{S}_i(t)$, the bidding decisions of Bob and Eve are temporally coupled and must be optimized jointly across all $T_a$ rounds.
We formulate bidder $i\in\{B,E\}$'s bidding problem as a finite-horizon sequential decision problem under budget constraints:

\begin{subequations}\label{eq:P1} 
\begin{align}
&  \max_{\{a_i(t)\}_{t=1}^{T_a}} \quad
U_i = \mathbb{E}\left[\sum_{t=1}^{T_a}\delta^{t-1}\,r_i(t) \right] \\
\textnormal{s.t.}\quad 
& a_i(t)\in[0,1], \quad \forall t\in\{1,\dots,T_a\}, \\
& B_i(t+1)=B_i(t)-\mathbb{I}_i(t)\,b_{i,m}(t),\ B_i(t)\ge 0,  \\
& \mathcal{S}_B(t)\cap\mathcal{S}_E(t)=\emptyset,\ \forall t, \\
& b_{i,m}(t)=\min\{a_i(t)\,v_{i,m}(t),\,B_i(t)\}.  
\end{align} 
\end{subequations}
where $r_i(t)=\mathbb{I}_i(t)\big[v_{i,m}(t)-b_{i,m}(t)\big]$ is the instantaneous economic surplus and $\delta\in(0,1)$ is the discount factor.
Note that 
problem~(\ref{eq:P1}) is intractable in closed form for three reasons. 
(i)~The valuation $v_{i,m}(t)$ is state-dependent on the historical allocation $\mathcal{S}_i(t)$ and on the opponent's unknown strategy, so the objective $U_i$ has no recursive closed form. 
(ii)~Bob and Eve have  incomplete information on each other's CSI, valuation, and remaining budget, which prevents the application of standard Bayesian equilibrium analysis. 
(iii)~The two bidders act  simultaneously, turning (\ref{eq:P1}) into a two-player stochastic game rather than a single-bidder MDP.
Moreover, the continuous bidding ratio $a_i(t)$ and the combinatorial allocation state cause the state-action space to grow exponentially with $M$, ruling out value iteration and exact dynamic programming. 
These properties motivate the multi-bidder reinforcement learning framework developed in Section~\ref{sec_mecha}.

\subsection{Security-Economic Coupling in RIS Module Allocation}
\label{subsec:physical_layer_analysis}

To connect economic auction decisions with physical-layer secrecy performance, we characterize how each module allocation changes the secrecy rate and derive the corresponding valuation-driven winning condition.

\begin{theorem}[Valuation-Driven Winning Condition] \label{theorem}
	Consider auction round $t$ for module $m$, and define the marginal rate gain of bidder $i\in\{B,E\}$ as
	\begin{align}
		\Delta_{i,m}(t)
		&=
		R_i\bigl(\mathcal{S}_i(t)\cup\{m\}\bigr)
		-
		R_i\bigl(\mathcal{S}_i(t)\bigr).
		\label{eq:marginal_rate_gain}
	\end{align}
	Under the module-isolated channel, we have
	$\Delta_{i,m}(t)\geq 0$.
	The secrecy-rate variation satisfies
	\begin{align}
	&	0
		\leq R_s(t+1)-R_s(t)
		\leq \Delta_{B,m}(t),
		&& \text{if Bob wins},
		\label{eq:secrecy_change_bob}\\
	&	-\Delta_{E,m}(t)
		\leq R_s(t+1)-R_s(t)
		\leq 0,
		&& \text{if Eve wins}.
		\label{eq:secrecy_change_eve}
	\end{align}
	
Excluding tie cases,  Bob wins module $m$ if and only if
	\begin{align}
	&	\min\!\left\{
		a_B(t)\lambda\Delta_{B,m}(t),\,B_B(t)
		\right\}\nonumber\\
	&	> 
		\min\!\left\{
		a_E(t)\lambda\Delta_{E,m}(t),\,B_E(t)
		\right\}.
		\label{eq:bob_winning_condition}
	\end{align}
	If neither bidder is budget-saturated, i.e.,
	$a_i(t)\lambda\Delta_{i,m}(t)\leq B_i(t)$ for $i\in\{B,E\}$,
	and $a_E(t)\Delta_{B,m}(t)>0$, the winning condition reduces to
	\begin{align}
		\frac{a_B(t)}{a_E(t)}
		>
		\frac{\Delta_{E,m}(t)}{\Delta_{B,m}(t)}.
		\label{eq:bidding_ratio_threshold}
	\end{align}
\end{theorem}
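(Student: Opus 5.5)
We prove the three parts in the order stated: monotonicity of the marginal gain, the secrecy-rate bounds, and the winning condition.

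\emph{Nonnegativity.} Since $R_i=\log_2(1+\gamma_i)$ is strictly increasing in $|h_{Ai}^{\mathrm{RIS}}+h_{Ai}^{\mathrm{dir}}|^2$, it suffices to show that adding module $m$ does not reduce this effective channel gain. Under the module-isolated channel the owner of module $m$ controls $\mathbf{\Phi}_m$, and it can co-phase the new contribution $\mathbf{h}_{Ri,m}^H\mathbf{\Phi}_m\mathbf{h}_{AR,m}$ with the existing sum $h_0=h_{Ai}^{\mathrm{RIS}}(\mathcal{S}_i(t))+h_{Ai}^{\mathrm{dir}}$. Choosing $\theta_{m,\ell}$ so that every term of the new module has the phase of $h_0$ gives $|h_0+c|=|h_0|+|c|\ge|h_0|$. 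Hence $\Delta_{i,m}(t)\ge 0$. This uses only the fact that modules are disjoint and that each bidder optimizes its own modules; with a suboptimal but fixed configuration the claim could fail. We therefore state explicitly that we take the rate $R_i(\cdot)$ to be evaluated under the owner's optimal phases.

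\emph{Secrecy variation.} If Bob wins, $R_B$ increases by $\Delta_{B,m}(t)\ge0$, while $R_E$ is unchanged because $\mathcal{S}_E$ is unchanged. The module-isolated assumption is what lets us treat Bob's module as not contributing to Eve's cascaded channel, consistent with \eqref{eq_hAI}. We then use that $x\mapsto[x]^+$ is nondecreasing and $1$-Lipschitz:
\[
0\le [R_B+\Delta_{B,m}-R_E]^+-[R_B-R_E]^+\le \Delta_{B,m}.
\]
The Eve case is symmetric. Her rate increases by $\Delta_{E,m}(t)\ge 0$ while $R_B$ is fixed, so the same two properties yield $-\Delta_{E,m}\le R_s(t+1)-R_s(t)\le 0$.

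\emph{Winning condition.} By the first-price rule with ties excluded, Bob wins iff $b_{B,m}(t)>b_{E,m}(t)$. Substituting \eqref{eq:bid} and \eqref{eq:valuation}, i.e.\ $v_{i,m}(t)=\lambda\Delta_{i,m}(t)$, gives \eqref{eq:bob_winning_condition} directly. Under non-saturation both minima equal their first arguments, and the condition becomes $a_B\lambda\Delta_{B,m}>a_E\lambda\Delta_{E,m}$. Dividing by $\lambda a_E\Delta_{B,m}>0$, which is positive since $\lambda>0$ and $a_E\Delta_{B,m}>0$ by hypothesis, preserves the inequality and yields \eqref{eq:bidding_ratio_threshold}.

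The main obstacle is the first step. We must fix precisely what ``module-isolated channel'' means: that the phases of owned modules are chosen to maximize the owner's own link, and that a module owned by one bidder adds nothing to the other's cascaded channel. All the remaining steps are elementary monotonicity and Lipschitz arguments.
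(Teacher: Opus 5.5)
Your proof is correct and follows essentially the same route as the paper. The paper likewise uses the fact that $[x]^+$ is nondecreasing and $1$-Lipschitz, applied to $x(t)=R_B(t)-R_E(t)$, and gets the winning condition by substituting $v_{i,m}(t)=\lambda\Delta_{i,m}(t)$ into the first-price rule. Your co-phasing argument $|h_0+c|=|h_0|+|c|$ for $\Delta_{i,m}(t)\ge 0$ and your explicit division by $\lambda a_E(t)\Delta_{B,m}(t)>0$ spell out steps the paper only asserts.
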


\begin{IEEEproof}
	Under per-module phase alignment, assigning module $m$ to bidder $i$ cannot decrease its achievable rate, i.e., 
	$\Delta_{i,m}(t)\geq 0$.
	Let $x(t)=R_B(t)-R_E(t)$. If Bob wins module $m$, then
	\begin{align}
		R_s(t+1)-R_s(t)
		&=
		\bigl[x(t)+\Delta_{B,m}(t)\bigr]^+
		-
		\bigl[x(t)\bigr]^+.
		\label{eq:proof_bob}
	\end{align}
	Since $[x]^+$ is nondecreasing and 1-Lipschitz, the difference in
	\eqref{eq:proof_bob} lies in
	$[0,\Delta_{B,m}(t)]$, which proves \eqref{eq:secrecy_change_bob}.
	If Eve wins module $m$, then
	\begin{align}
		R_s(t+1)-R_s(t)
		&=
		\bigl[x(t)-\Delta_{E,m}(t)\bigr]^+
		-
		\bigl[x(t)\bigr]^+,
		\label{eq:proof_eve}
	\end{align}
	which lies in
	$[-\Delta_{E,m}(t),0]$, proving \eqref{eq:secrecy_change_eve}.
	Finally, \eqref{eq:bob_winning_condition} follows directly from the first-price sealed-bid rule and the bid definition in \eqref{eq:bid}. When both budget constraints are inactive, substituting
	$v_{i,m}(t)=\lambda\Delta_{i,m}(t)$ and cancelling the common factor $\lambda$ yields \eqref{eq:bidding_ratio_threshold}.
\end{IEEEproof}

 \begin{figure}[t]
 	\centering
 	\includegraphics[width=\linewidth]{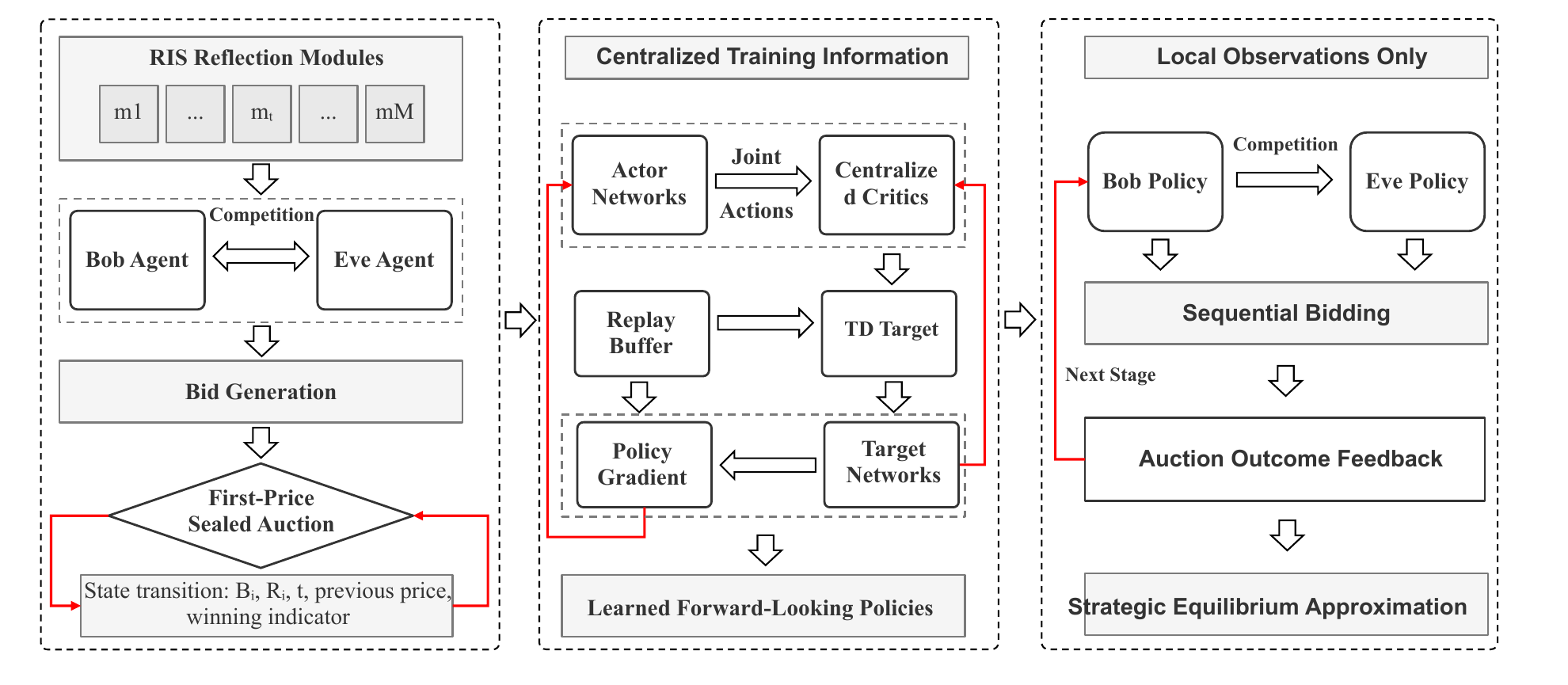}
 	\caption{SRA Mechanism with MARL-Based Bidding}
 	\label{fig:fig2_frame}
 	\vspace{-10pt}
 \end{figure}
 
Theorem~\ref{theorem} reveals the security-economic coupling induced by RIS module allocation. Although the bidding policy is trained with economic surplus rewards, its effect on secrecy is induced through module ownership and the resulting rate increments $\Delta_{B,m}(t)$ and $\Delta_{E,m}(t)$. This coupling also provides the interface to the learning-based solution, where auction states, bidding actions, surplus rewards, and state transitions are defined according to the SRA mechanism.

\section{MARL-Based Bidding Strategy}\label{sec_mecha}

Given problem~\eqref{eq:P1}, this section develops a learning-based solution for the SRA mechanism, as shown in \ref{fig:fig2_frame}. 
We first instantiate the sequential bidding problem as a Markov game by defining the state, action, reward, and state transition induced by module allocation and budget updates. We then adopt a CTDE-based MADDPG framework, where each bidder learns a forward-looking policy that maps local auction observations to budget-feasible bidding ratios and maximizes long-term economic surplus.

\subsection{Markov Game Instantiation}

In the SRA mechanism, RIS modules are allocated sequentially under budget constraints. 
The allocation outcome at each auction round affects subsequent rounds through the accumulated allocated modules and the remaining budget
\cite{Ahmad2023, Dai2026}. Therefore, we model the SRA as a two-bidder stochastic Markov game.
Let $\mathcal{N}=\{B,E\}$ denote the bidder set. The Markov game is represented by the state-action-reward-transition tuple $\mathcal{G}_{\mathrm{SRA}}=\langle\mathcal{S},\mathcal{A},\mathcal{R},\mathcal{P}\rangle$, where $\mathcal{S}$ denotes the auction state space, $\mathcal{A}=\mathcal{A}_B\times\mathcal{A}_E$ denotes the joint action space, $\mathcal{R}=\{r_B,r_E\}$ denotes the reward functions, and $\mathcal{P}$ denotes the transition kernel induced by the first-price auction rule and the budget update.
The key components of the Markov game are defined as follows.

\textbf{a) State:}
At auction round $t$, centralized training can use the full auction state defined in Section~\ref{sec_SRA}, which gathers the allocation history $\mathcal{S}_B(t)$, $\mathcal{S}_E(t)$, the budgets $B_B(t)$, $B_E(t)$, the instantaneous rates $R_B(t)$, $R_E(t)$, the secrecy rate $R_s(t)$, and the round index $t$. Following the standard centralized-training decentralized-execution (CTDE), bidder $i$ instead observes a partial, privacy-preserving local state vector
\begin{align}
	&\mathbf{s}_i(t)= \nonumber \\
	&\Big[
	R_i(t-1),
	\frac{B_i(t)}{B_i(0)},
	\frac{t}{T_a},\;
	v_{i,m}(t),\;
	\frac{p_{t-1}}{B_i(0)},
	\mathbb{I}(w_{t-1}=i)
	\Big],
	\label{eq:state_space}
\end{align}
which includes the accumulated communication rate, normalized remaining budget, auction progress, current valuation, previous clearing price, and recent winning outcome.

This state representation captures the principal factors governing sequential auction dynamics.
In particular, the remaining budget defines the feasible bidding space, while the valuation reflects the marginal communication gain of the current module.
The accumulated rate encodes historical allocation outcomes, and the auction progress provides temporal context.
Moreover, the previous price and winning indicator convey recent competition intensity and opponent behavior.
Overall, the proposed observation integrates both economic constraints and competitive signals, enabling bidders to infer opponent strategies and adapt their bidding strategies accordingly.

\textbf{b) Action:}
At each round, bidder $i$ selects a continuous bidding ratio
\begin{equation}
	a_i(t)\in[0,1],
\end{equation}
which controls the aggressiveness of its bid relative to the current module valuation. 
The actual bid $b_{i,m}(t)$ is then generated according to the economic and physical constraints defined in~\eqref{eq:bid}. 
By optimizing a normalized ratio rather than an absolute monetary value, the reinforcement learning bidder operates within a bounded and well-structured action space. 
This formulation guarantees feasibility under budget constraints while enabling the policy to learn complex, non-truthful bidding strategies that optimize long-term utility.

\textbf{c) Reward:}
The instantaneous reward is defined as
\begin{equation}\label{eq:reward}
	r_i(t)\!=\!
	\begin{cases}
		v_{i,m}(t)-b_{i,m}(t), &\!\! \text{if bidder $i$ wins module $m$},\!\\
		0, &\!\! \text{otherwise}.
	\end{cases}
\end{equation}
This reward corresponds to the economic surplus achieved at each auction round and captures the trade-off between aggressive bidding and budget preservation. Although secrecy is not explicitly encoded in the reward, the reward is physically grounded through the valuation $v_{i,m}(t)=\lambda\Delta_{i,m}(t)$. For Bob, a larger valuation corresponds to a larger legitimate-link rate gain $\Delta_{B,m}(t)$ and thus promotes the acquisition of modules that increase $R_B$; for Eve, a larger valuation corresponds to a larger interception-link rate gain $\Delta_{E,m}(t)$ and thus promotes the acquisition of modules that increase $R_E$. As established in Theorem~\ref{theorem}, the winner of each module determines whether the subsequent secrecy-rate variation is governed by $\Delta_{B,m}(t)$ or $\Delta_{E,m}(t)$. 
Therefore, the secrecy performance emerges from the competition between surplus-driven legitimate-link enhancement and surplus-driven interception-link enhancement under budget constraints.


\textbf{d) State Transition:}
Given the current module $m_t$ and the joint actions $\{a_B(t),a_E(t)\}$, the bids are computed according to~\eqref{eq:bid}. Ignoring tie cases, the winner and clearing price are given by
\begin{align}
&	w_t = \arg\max_{i\in\{B,E\}} b_{i,m_t}(t), \\
 &	p_t= b_{w_t,m_t}(t), \\
&	\mathbb{I}_i(t) = \mathbb{I}(w_t=i).
\end{align}
The auction state then evolves as
\begin{align}
	B_i(t+1) &= B_i(t)-\mathbb{I}_i(t)p_t, \\
	\mathcal{S}_i(t+1) &=
	\begin{cases}
		\mathcal{S}_i(t)\cup\{m_t\}, & \mathbb{I}_i(t)=1,\\
		\mathcal{S}_i(t), & \mathbb{I}_i(t)=0,
	\end{cases} \\
	R_i(t+1) &= R_i\big(\mathcal{S}_i(t+1)\big), \\
	R_s(t+1) &= [R_B(t+1)-R_E(t+1)]^+ .
\end{align}
Since the channel realization is fixed within an episode and the related parameters are fully determined by the current state and joint actions, the transition satisfies the Markov property.

\subsection{MADDPG under CTDE}

In the SRA mechanism, each bidder chooses a continuous bidding ratio, which makes multi-bidder deep deterministic policy gradient (MADDPG) suitable for learning deterministic continuous-control policies. Meanwhile, Bob and Eve update their bidding strategies simultaneously, making the auction environment non-stationary from the perspective of each individual bidder. The CTDE addresses this issue by using global auction information during training while allowing each bidder to execute its policy based only on local observations. 
Therefore, in this section, we adopt the MADDPG framework to solve the Markov game $\mathcal{G}_{\mathrm{SRA}}$ under the centralized training and decentralized execution (CTDE).
The centralized critic is used only during offline training, where the training environment has access to the global auction state and joint actions. During decentralized execution, bidder $i$ selects $a_i(t)$ solely from its local observation $\mathbf{s}_i(t)$, and its CSI and private valuation are not disclosed to the opponent.

\subsubsection{Policy Representation}

Each bidder $i \in \{B,E\}$ is equipped with a deterministic policy network $\mu_i(\mathbf{s}_i; \phi_i)$ that maps its local observation $\mathbf{s}_i$ to a continuous bidding ratio $a_i(t) \in [0,1]$.  
The monetary bid is subsequently determined by the valuation and remaining budget according to~\eqref{eq:bid}. Therefore, the actor only needs to learn a normalized bidding policy, while bid feasibility is enforced by the SRA mechanism.

\subsubsection{Centralized Critic and Value Estimation}

During training, each bidder maintains a centralized critic $Q_i(\mathbf{s}, a_B, a_E; \theta_i)$ that evaluates joint actions under the global state $\mathbf{s} = [\mathbf{s}_B, \mathbf{s}_E]$. 
To mitigate the non-stationarity induced by simultaneous policy updates, the critic is trained by minimizing the temporal-difference (TD) loss
\begin{equation}
	L(\theta_i) = \mathbb{E} \left[ \left( y_i - Q_i(\mathbf{s}, a_B, a_E) \right)^2 \right].
\end{equation}
The target value is computed using target networks as
\begin{equation}
	y_i = r_i + \delta Q_i'(\mathbf{s}', a_B', a_E') \big|_{a_j' = \mu_j'(\mathbf{s}_j')}.
\end{equation}

Here, $r_i$ denotes the economic surplus reward defined in~\eqref{eq:reward}, and $\delta$ is the discount factor. 
Leveraging the global state and joint actions during training yields a more stable value estimate that captures the evolving strategies of competing bidders, while execution remains fully decentralized and relies only on local observations.

\subsubsection{Policy Update}

The actor parameters $\phi_i$ are updated via gradient ascent on the expected return. 
According to the deterministic policy gradient theorem, the gradient is given by
\begin{equation}
	\nabla_{\phi_i} J(\phi_i) \!=\! \mathbb{E} \left[ \nabla_{a_i} Q_i(\mathbf{s}, a_B, a_E) \big|_{a_i = \mu_i(\mathbf{s}_i)} \nabla_{\phi_i} \mu_i(\mathbf{s}_i) \right].\!
\end{equation} 
This formulation establishes a direct link between local policy parameters and the global value function, enabling each bidder to adjust its bidding ratio toward maximizing long-term utility. 
Through feedback from the centralized critic, the learned policy internalizes inter-temporal budget coupling and anticipates opponent responses, gradually approaching a strategic equilibrium over the auction horizon.

\subsection{Algorithm and Discussion}

From a game-theoretic perspective, the learned policies can be interpreted as approximate best responses under repeated competition. 
Compared with static or myopic bidding rules, the MARL-based approach enables bidders to anticipate future opportunities and adapt dynamically to environmental and strategic changes. 
Therefore, the proposed framework provides a computational approach for approximating equilibrium behavior in dynamic competitive environments with budget constraints.

For clarity, the detailed update rules follow the standard MADDPG framework and are omitted in Algorithm~\ref{alg:SA}. 
The overall training and interaction procedure is summarized as follows.

\begin{algorithm}[t]
	
	\footnotesize
	\caption{MARL-Based Sequential Auction under CTDE Framework}
	\label{alg:SA} 
	\begin{algorithmic}[1]
		\Require
		Number of RIS modules $M$ (auction rounds $T_a=M$); initial budgets $\{B_i(0)\}$; discount factor $\delta$
		\Ensure 
		Learned policies $\{\mu_i\}$
		
		\State Initialize actor $\mu_i$, critic $Q_i$, and target networks
		\State Initialize replay buffer $\mathcal{D}$
		
		\For{each episode}
		\State Initialize state $\mathbf{s}(1)$
		
		\For{$t = 1$ to $T_a$}
		\For{each bidder $i$}
		\State Observe $\mathbf{s}_i(t)$ and select $a_i(t) = \mu_i(\mathbf{s}_i(t)) + \epsilon$
		\State Compute bid $b_i(t)$
		\EndFor
		
		\State Execute auction and update budgets
		\State Observe $(\mathbf{r}(t), \mathbf{s}(t+1))$
		\State Store transition in $\mathcal{D}$
		\EndFor
		
		\State Update critic via TD loss
		\State Update actor via policy gradient
		\State Soft update target networks
		\EndFor
	\end{algorithmic}
\end{algorithm}

\section{Simulation Results and Discussion}\label{sec_simul}

In this section, we evaluate the proposed SRA framework by examining how sequential allocation, budget coupling, and strategic bidding affect secrecy performance, bidding behavior, and RIS-provider revenue. We further assess the security and economic efficiency of the resulting auction outcomes.

\begin{table}[t]
	\centering
	\caption{Simulation Parameters}
	\vspace{-7pt}
	\label{tab:sim_params}
	\begin{tabular}{lp{0.45\linewidth}}
		\hline
		\textbf{Parameter} & \textbf{Value} \\
		\hline
		\multicolumn{2}{c}{\textit{Physical Layer and Channel Settings}} \\
		RIS size & $N=64$ ($8 \times 8$) elements  \\
		Number of RIS modules & $M=8$ \\
		Elements per RIS module & $L=N/M=8$ elements \\ 
		Channel model & Rician fading ($K=10$ dB) \\
		Pathloss exponent & $\alpha = 2.2$ \\
		Transmit power & $40$~dBm \\
		Noise power & $-90$~dBm \\
		\hline
		\multicolumn{2}{c}{\textit{Sequential Auction Parameters}} \\
		Initial budget (Bob/Eve) & $B_B = B_E = 10$ \\
		Auction rounds & $T_a=8$ (one module per round) \\
		Auction rule & First-price sealed-bid \\
		Auction termination & All modules allocated or budget exhausted \\
		\hline
		\multicolumn{2}{c}{\textit{MADDPG Training Parameters}} \\
		Actor/Critic hidden layers & 2 layers, 256 neurons each \\
		Actor learning rate & $1\times10^{-4}$ \\
		Critic learning rate & $3\times10^{-4}$ \\
		Discount factor & $\delta = 0.98$ \\
		Target network update rate & $\tau = 0.005$ \\
		Replay buffer size & $10^5$ \\
		Training episodes & $10{,}000$ \\
		Exploration noise (start/end) & $0.5 \rightarrow 0.001$ \\
		\hline
	\end{tabular}
	\vspace{-10pt}
\end{table}

\subsection{Simulation Setup and Baseline Strategies} 

Table~\ref{tab:sim_params} summarizes the main simulation parameters. 
For a fair comparison, all bidding strategies are evaluated using the same network topology, RIS partition, channel realizations, initial budgets, and module-valuation model. 
Unless otherwise specified, each reported result is averaged over $N_{\mathrm{eval}}=100$ independent evaluation episodes after training convergence, with the learned policies fixed during evaluation.
An $8\times8$ RIS comprising $N_{\mathrm{RIS}}=64$ reflecting elements is partitioned into $M=8$ disjoint modules, each containing $L=8$ elements. In each of the $T_a=M$ auction rounds, one module is offered, and Bob and Eve submit bids according to the first-price sealed-bid rule. All wireless links follow Rician fading to capture both line-of-sight and non-line-of-sight components. For every bidding strategy, the valuation of the current module is computed from its state-dependent marginal rate gain according to Section~\ref{subsec:valuation_model}.
All monetary quantities are normalized, and $\lambda=1$ is used to map one unit of marginal rate gain into one normalized budget unit.

Under the SRA mechanism, bidders decide sequentially from their available auction information (remaining budget, current valuation, and public feedback on previous outcomes). 
To provide benchmarks for evaluating learning-based bidding strategies, we consider two representative baseline strategies that capture limited rationality and myopic decision-making behavior.

\textbf{1) Fixed Bidding Strategy:} 
In the fixed bidding strategy, each bidder uses a constant valuation-based bidding ratio throughout the auction process. 
Specifically, at each round $t$, the bid is determined as a fixed fraction of the current module valuation, i.e.,
$b_{i,m}(t) = \min\{\bar{a}_i v_{i,m}(t), B_i(t)\}$, where $\bar{a}_i \in [0, 1]$ is a predetermined constant.   

\textbf{2) Random Bidding Strategy:} 
In the random bidding strategy, each bidder independently decides whether to participate in the current auction round with a predefined probability, subject to the remaining budget. 
If participation occurs, the bid is randomly drawn from the feasible range determined by the available budget. 
This strategy does not exploit module valuation, past auction outcomes, or future opportunities, and thus represents an uninformed stochastic bidding behavior.

\begin{figure}[t]
	\centering
	\includegraphics[width=0.7\columnwidth]{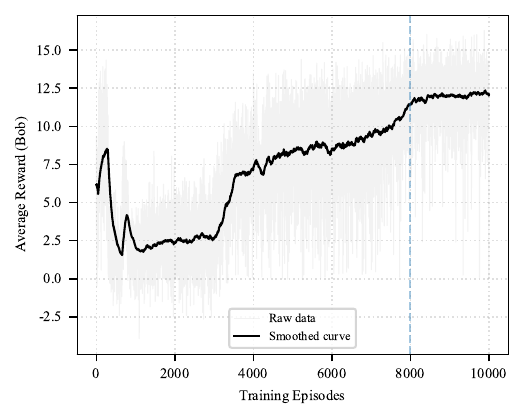}
	\caption{Training process of the RL-based bidding strategy}
	\label{fig:training_curve}
	\vspace{-10pt}
\end{figure}

Fig.~\ref{fig:training_curve} illustrates the training trajectory of the proposed RL-based bidding strategy. The raw reward exhibits substantial fluctuations owing to the stochastic nature of wireless channels and the competitive multi-bidder environment. Consequently, a moving average is utilized to reveal the underlying learning trend.
The smoothed reward increases from approximately 2.5 to 12.0 and then stabilizes, indicating that the MADDPG-based method reaches a stable bidding strategy despite channel randomness and simultaneous policy updates.

\vspace{-7pt}
\subsection{Secrecy-Rate Performance Against Different Eve Strategies}
\begin{table}[t]
	\centering
	\caption{Row-average performance over all Eve strategies. }
	\vspace{-7pt}
	\label{tab:SA_summary}
	\setlength{\tabcolsep}{4pt}
	\renewcommand{\arraystretch}{1.1}
	\footnotesize
	\begin{tabular}{lccccc}
		\toprule
		Bob strategy & Reward $\uparrow$ & Secrecy $\uparrow$ & RIS Rev. & Bid & Sec./Bid $\uparrow$ \\
		\midrule
		RL     & \textbf{\underline{10.34}} 
		& \textbf{\underline{2.70}} 
		& 3.89  
		& 3.55 
		& \textbf{\underline{1.03}} \\
		Random & 4.01 & 1.64 & 8.38 & 6.17 & 0.25 \\
		\midrule
		Fixed(0.1) & 2.49 & 0.32 & 5.74  & \textbf{\underline{0.21}} & 0.39 \\
		Fixed(0.2) & 4.39 & 0.96 & 5.35  & 1.04 & 0.44 \\
		Fixed(0.3) & 5.61 & 1.41 & 5.83  & 2.43 & \textbf{\underline{0.45}} \\
		Fixed(0.4) & \textbf{\underline{6.36}} & 2.00 & 6.57  & 4.15 & 0.41 \\
		Fixed(0.5) & 6.07 & 2.25 & 7.64  & 5.90 & 0.35 \\
		Fixed(0.6) & 5.30 & 2.42 & 8.87  & 7.90 & 0.31 \\
		Fixed(0.7) & 4.26 & \textbf{\underline{2.57}} & 9.93  & 9.38 & 0.27 \\
		Fixed(0.8) & 2.94 & 2.27 & 10.51 & 9.76 & 0.23 \\
		Fixed(0.9) & 1.73 & 1.89 & 10.87 & 9.86 & 0.19 \\
		Fixed(1.0) & 0.72 & 1.57 & \textbf{\underline{11.22}} & 9.91 & 0.16 \\
		\midrule
		Best fixed$^\dagger$ & 6.36 & 2.57 & 11.22 & 0.21 & 0.45 \\
		RL margin$^\dagger$  & +62.6\% & +5.1\% & -- & -- & +128.9\% \\
		\bottomrule
	\end{tabular} 
	\vspace{-15pt}
\end{table}

\begin{table*}[t]
	\centering
	\caption{Secrecy rate comparison with different strategies.}
	\vspace{-7pt}
	\label{tab:SA_secrecy}
	\setlength{\tabcolsep}{3pt}
	\renewcommand{\arraystretch}{1.05}
	\resizebox{\textwidth}{!}{
		\begin{tabular}{lccccccccccccc}
			\toprule
			Bob\textbackslash Eve & RL & Random & Fixed(0.1) & Fixed(0.2) & Fixed(0.3) & Fixed(0.4) & Fixed(0.5) & Fixed(0.6) & Fixed(0.7) & Fixed(0.8) & Fixed(0.9) & Fixed(1.0) & Avg. \\
			\midrule
			RL          
			& \textbf{\underline{2.50}}
			& \textbf{\underline{2.59}}
			& \textbf{\underline{2.76}}
			& \textbf{\underline{2.85}}
			& \textbf{\underline{2.49}}
			& \textbf{\underline{2.80}}
			& \textbf{\underline{2.83}}
			& \textbf{\underline{2.70}}
			& \textbf{\underline{2.80}}
			& \textbf{\underline{2.75}}
			& \textbf{\underline{2.71}}
			& \textbf{\underline{2.58}}
			& \textbf{\underline{2.70}} \\
			Random      & 1.62  & 1.80   & 2.72    & 2.54    & 2.28    & 2.10    & 1.81    & 1.54    & 1.25    & 0.98    & 0.78    & 0.58    & 1.64 \\
			\midrule
			Fixed(0.1)  & 0.00  & 0.05   & 2.75    & 0.98    & 0.09    & 0.01    & 0.00    & 0.00    & 0.00    & 0.00    & 0.00    & 0.00    & 0.32 \\
			Fixed(0.2)  & 0.00  & 0.51   & 2.86    & 2.75    & 1.95    & 0.87    & 0.37    & 0.11    & 0.04    & 0.00    & 0.00    & 0.00    & 0.96 \\
			Fixed(0.3)  & 0.09  & 1.36   & 2.84    & 2.83    & 2.75    & 2.31    & 1.60    & 0.93    & 0.52    & 0.23    & 0.11    & 0.06    & 1.41 \\
			Fixed(0.4)  & 0.38  & 2.06   & 2.85    & 2.84    & 2.84    & 2.72    & 2.43    & 1.92    & 1.36    & 0.94    & 0.60    & 0.38    & 2.00 \\
			Fixed(0.5)  & 1.12  & 2.49   & 2.84    & 2.85    & 2.85    & 2.82    & 2.77    & 2.54    & 2.20    & 1.67    & 1.27    & 0.94    & 2.25 \\
			Fixed(0.6)  & 1.70  & 2.68   & 2.86    & 2.85    & 2.86    & 2.83    & 2.81    & 2.74    & 2.54    & 2.26    & 1.94    & 1.57    & 2.42 \\
			Fixed(0.7)  & 2.43  & 2.66   & 2.80    & 2.78    & 2.76    & 2.75    & 2.71    & 2.67    & 2.58    & 2.43    & 2.23    & 1.92    & 2.57 \\
			Fixed(0.8)  & 2.37  & 2.30   & 2.42    & 2.39    & 2.37    & 2.33    & 2.30    & 2.25    & 2.20    & 2.15    & 2.07    & 1.91    & 2.27 \\
			Fixed(0.9)  & 1.98  & 1.92   & 2.01    & 1.99    & 1.95    & 1.93    & 1.91    & 1.87    & 1.82    & 1.79    & 1.78    & 1.67    & 1.89 \\
			Fixed(1.0)  & 1.67  & 1.60   & 1.68    & 1.62    & 1.61    & 1.59    & 1.58    & 1.52    & 1.51    & 1.47    & 1.43    & 1.41    & 1.57 \\
			\midrule
			Column Avg.
			& \textbf{\underline{1.32}}
			& \textbf{\underline{1.84}}
			& \textbf{\underline{2.62}}
			& \textbf{\underline{2.44}}
			& \textbf{\underline{2.23}}
			& \textbf{\underline{2.09}}
			& \textbf{\underline{1.93}}
			& \textbf{\underline{1.73}}
			& \textbf{\underline{1.57}}
			& \textbf{\underline{1.39}}
			& \textbf{\underline{1.24}}
			& \textbf{\underline{1.09}}
			& --- \\
			\bottomrule
	\end{tabular}}
	\vspace{-10pt}
\end{table*}

After training convergence, the learned bidding strategy is evaluated under the SRA mechanism against random and fixed baselines under heterogeneous Eve strategies. Table~\ref{tab:SA_summary} reports the row-average performance, where Bob-RL achieves the highest reward and secrecy rate. This indicates that the learned policy can jointly exploit state-dependent module valuations, payment costs, and remaining budgets, whereas fixed and random strategies are more sensitive to strategy mismatch.
Table~\ref{tab:SA_secrecy} further presents the complete Bob--Eve secrecy-rate matrix. Bob-RL achieves the highest row-average secrecy rate of $2.70$~bps/Hz and remains the best response to adaptive Eve-RL with $2.50$~bps/Hz. From the attacker side, Eve-RL reduces Bob's average secrecy rate to $1.32$~bps/Hz, revealing a strong adaptive threat and confirming the value of forward-looking bidding under strategic competition.

\vspace{-7pt}

\subsection{Secrecy Distribution and Physical-Layer Upper-Bound Verification}
\label{subsec:SA_performance}

\begin{table*}[t]
	\centering 
	\caption{Secrecy rate per unit cost under different bidding strategies.}
	\vspace{-7pt}
	\label{tab:SA_rate_bid_ratio}
	\setlength{\tabcolsep}{3pt}
	\renewcommand{\arraystretch}{1.05}
	\resizebox{\textwidth}{!}{
		\begin{tabular}{lccccccccccccc}
			\toprule
			Bob\textbackslash Eve & RL   & Random & Fixed(0.1) & Fixed(0.2) & Fixed(0.3) & Fixed(0.4) & Fixed(0.5) & Fixed(0.6) & Fixed(0.7) & Fixed(0.8) & Fixed(0.9) & Fixed(1.0) & Avg. \\
			\midrule
			RL
			& \textbf{\uline{0.42}}
			& \textbf{\uline{0.53}}
			& \textbf{\uline{3.37}}
			& \textbf{\uline{1.89}}
			& \textbf{\uline{1.22}}
			& \textbf{\uline{1.06}}
			& \textbf{\uline{0.86}}
			& \textbf{\uline{0.81}}
			& \textbf{\uline{0.70}}
			& \textbf{\uline{0.60}}
			& \textbf{\uline{0.49}}
			& \textbf{\uline{0.45}}
			& \textbf{\uline{1.03}} \\
			Random      & 0.25 & 0.27 & 0.37 & 0.35 & 0.31 & 0.30 & 0.27 & 0.24 & 0.21 & 0.18 & 0.16 & 0.14 & 0.25 \\
			\midrule
			Fixed(0.1)  & 0.00 & 0.25 & 1.91 & 1.46 & 0.69 & 0.33 & 0.00 & 0.00 & 0.00 & 0.00 & 0.00 & 0.00 & 0.39 \\
			Fixed(0.2)  & 0.00 & 0.46 & 0.97 & 0.95 & 0.86 & 0.70 & 0.59 & 0.37 & 0.31 & 0.00 & 0.00 & 0.00 & 0.44 \\
			Fixed(0.3)  & 0.19 & 0.48 & 0.65 & 0.65 & 0.64 & 0.60 & 0.54 & 0.47 & 0.41 & 0.33 & 0.27 & 0.18 & 0.45 \\
			Fixed(0.4)  & 0.29 & 0.42 & 0.49 & 0.48 & 0.48 & 0.48 & 0.46 & 0.42 & 0.39 & 0.36 & 0.32 & 0.26 & 0.41 \\
			Fixed(0.5)  & 0.29 & 0.37 & 0.39 & 0.39 & 0.39 & 0.39 & 0.38 & 0.37 & 0.36 & 0.33 & 0.31 & 0.28 & 0.35 \\
			Fixed(0.6)  & 0.28 & 0.32 & 0.32 & 0.32 & 0.33 & 0.32 & 0.32 & 0.32 & 0.31 & 0.30 & 0.29 & 0.27 & 0.31 \\
			Fixed(0.7)  & 0.27 & 0.27 & 0.28 & 0.28 & 0.28 & 0.28 & 0.28 & 0.28 & 0.27 & 0.26 & 0.26 & 0.24 & 0.27 \\
			Fixed(0.8)  & 0.24 & 0.23 & 0.24 & 0.24 & 0.24 & 0.23 & 0.23 & 0.23 & 0.22 & 0.22 & 0.22 & 0.21 & 0.23 \\
			Fixed(0.9)  & 0.20 & 0.19 & 0.20 & 0.20 & 0.20 & 0.19 & 0.19 & 0.19 & 0.18 & 0.18 & 0.18 & 0.17 & 0.19 \\
			Fixed(1.0)  & 0.17 & 0.16 & 0.17 & 0.16 & 0.16 & 0.16 & 0.16 & 0.15 & 0.15 & 0.15 & 0.15 & 0.14 & 0.16 \\
			\midrule
			Col. Avg.
			& \textbf{\uline{0.22}}
			& \textbf{\uline{0.33}}
			& \textbf{\uline{0.78}}
			& \textbf{\uline{0.56}}
			& \textbf{\uline{0.43}}
			& \textbf{\uline{0.38}}
			& \textbf{\uline{0.31}}
			& \textbf{\uline{0.28}}
			& \textbf{\uline{0.25}}
			& \textbf{\uline{0.21}}
			& \textbf{\uline{0.19}}
			& \textbf{\uline{0.18}}
			& --- \\
			\bottomrule
	\end{tabular}}
	\vspace{-10pt}
\end{table*} 
\begin{figure*}[t]
	\centering
	\begin{minipage}{0.31\textwidth}
		\centering
		\includegraphics[width=\linewidth]{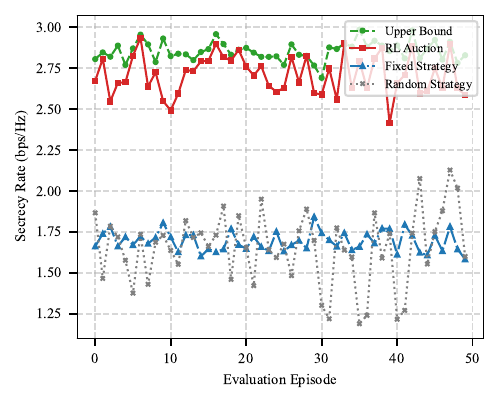}
		\caption{SR vs. Episode}
		\label{fig:SA_Trace}
	\end{minipage} \hfill
	\begin{minipage}{0.31\textwidth}
		\centering
		\includegraphics[width=\linewidth]{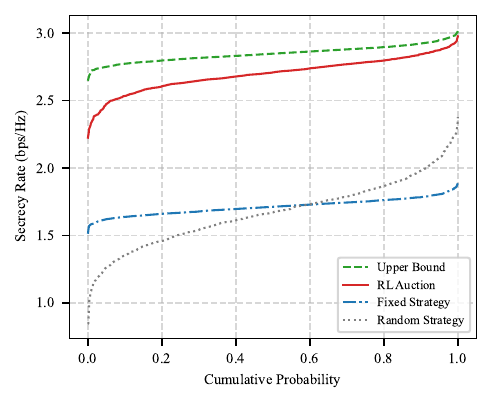}
		\caption{SR vs. CDF}
		\label{fig:SA_CDF}
	\end{minipage} \hfill
	\begin{minipage}{0.31\textwidth}
		\centering
		\includegraphics[width=\linewidth]{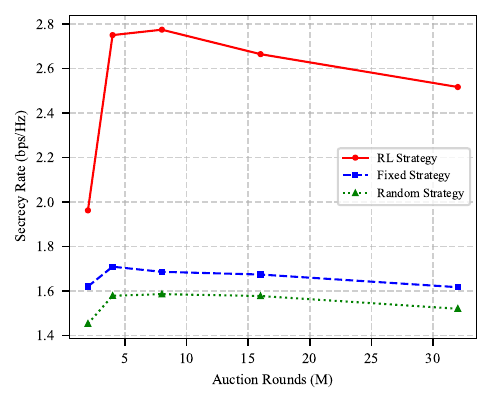}
		\caption{SR vs. Number of Auction Rounds}
		\label{fig:Robust_M}
	\end{minipage}
	\vspace{-20pt}
\end{figure*}
 
\begin{figure*}[t]
	\centering
	\begin{minipage}{0.31\textwidth}
		\centering
		\includegraphics[width=\linewidth]{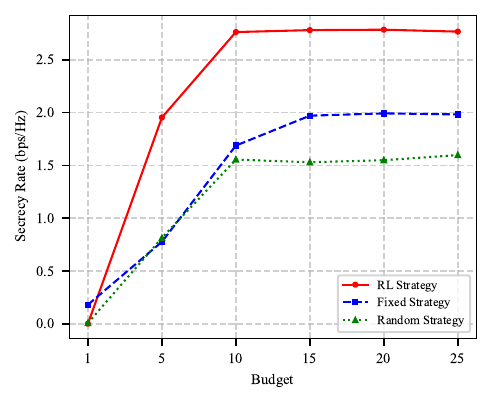}
		\caption{SR versus the Initial Budget}
		\label{fig:Robust_Budget}
	\end{minipage}\hfill
	\begin{minipage}{0.3\textwidth}
		\centering
		\includegraphics[width=\linewidth]{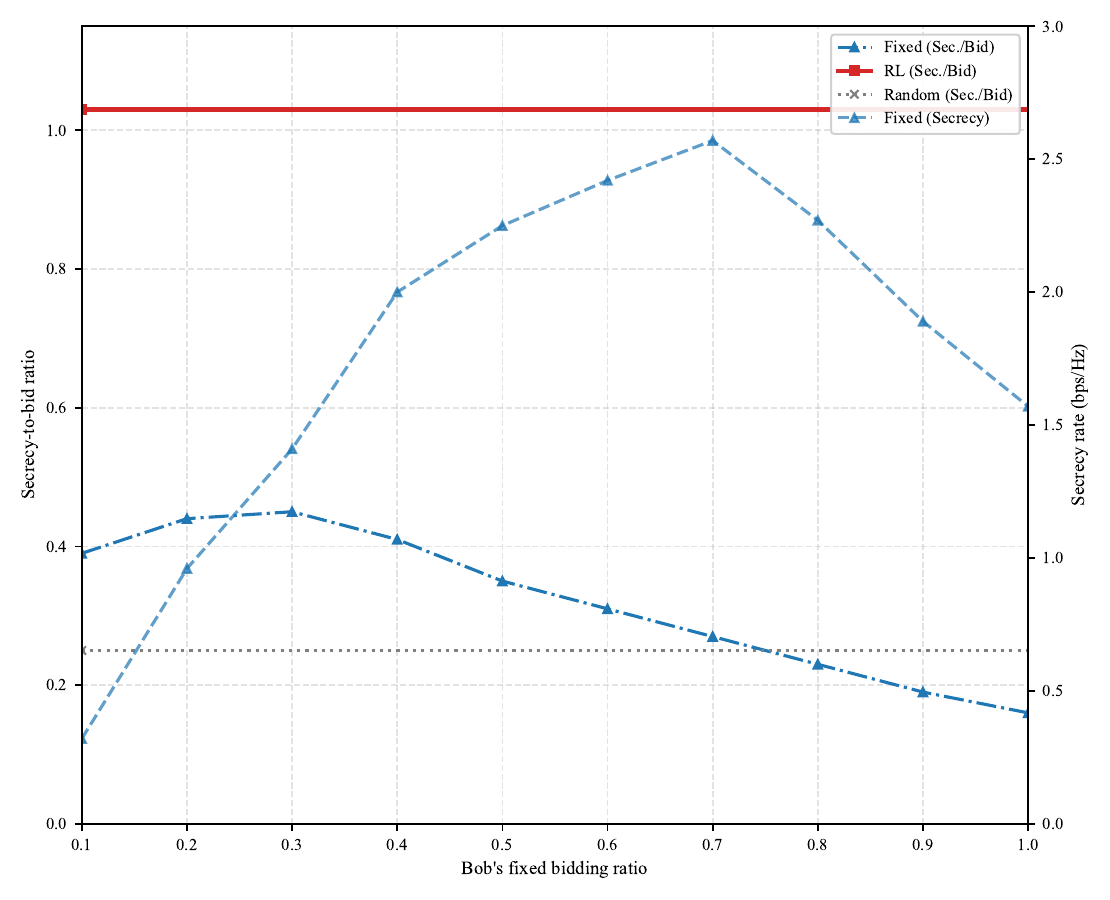}
		\caption{Unit-price secrecy versus Bob's bidding aggressiveness.}
		\label{fig:SecBid_ratio}
	\end{minipage}\hfill
	\begin{minipage}{0.31\textwidth}
		\centering
		\includegraphics[width=\linewidth]{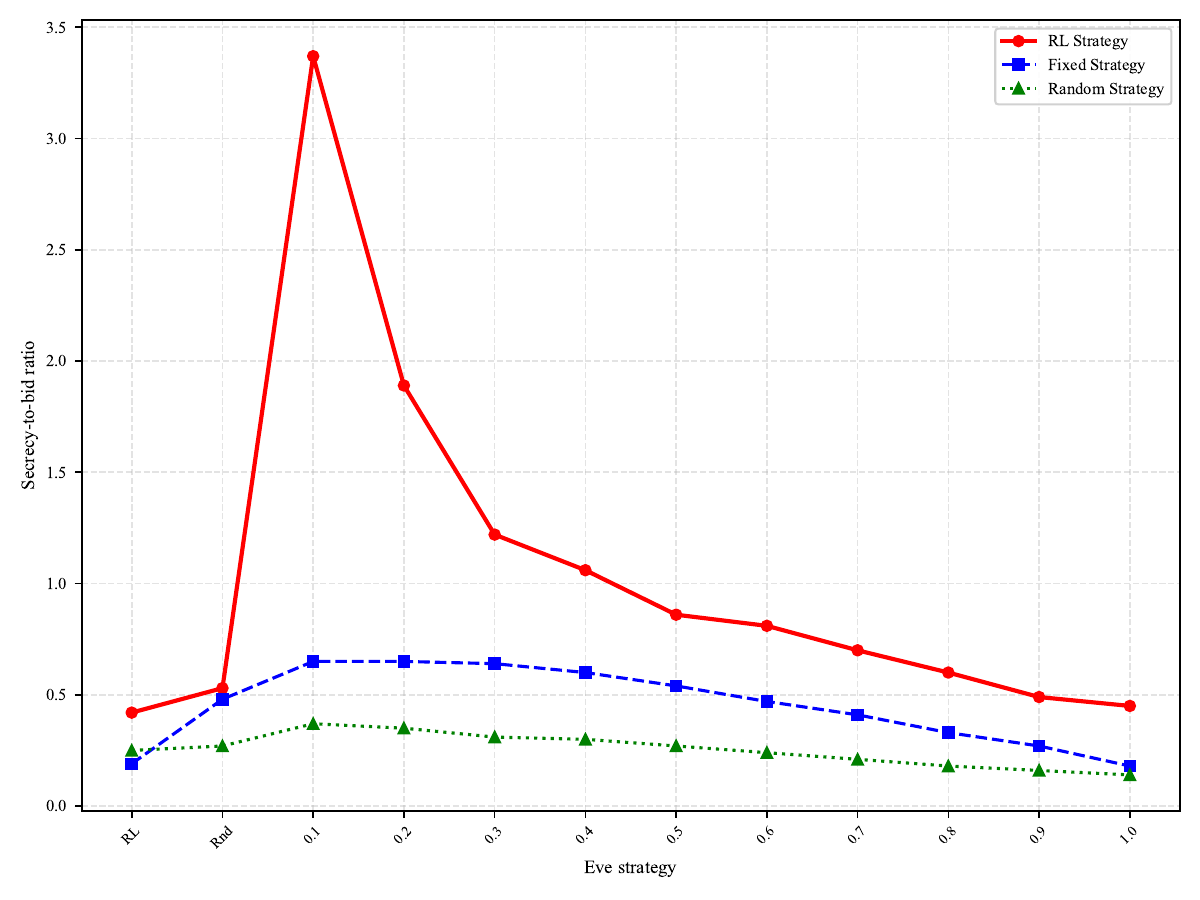}
		\caption{Unit-price secrecy of Bob versus Eve's strategy.}
		\label{fig:SecBid_eve}
	\end{minipage}
	\vspace{-15pt}
\end{figure*}

\begin{figure}[t]
	\centering
	\includegraphics[width=0.63\columnwidth]{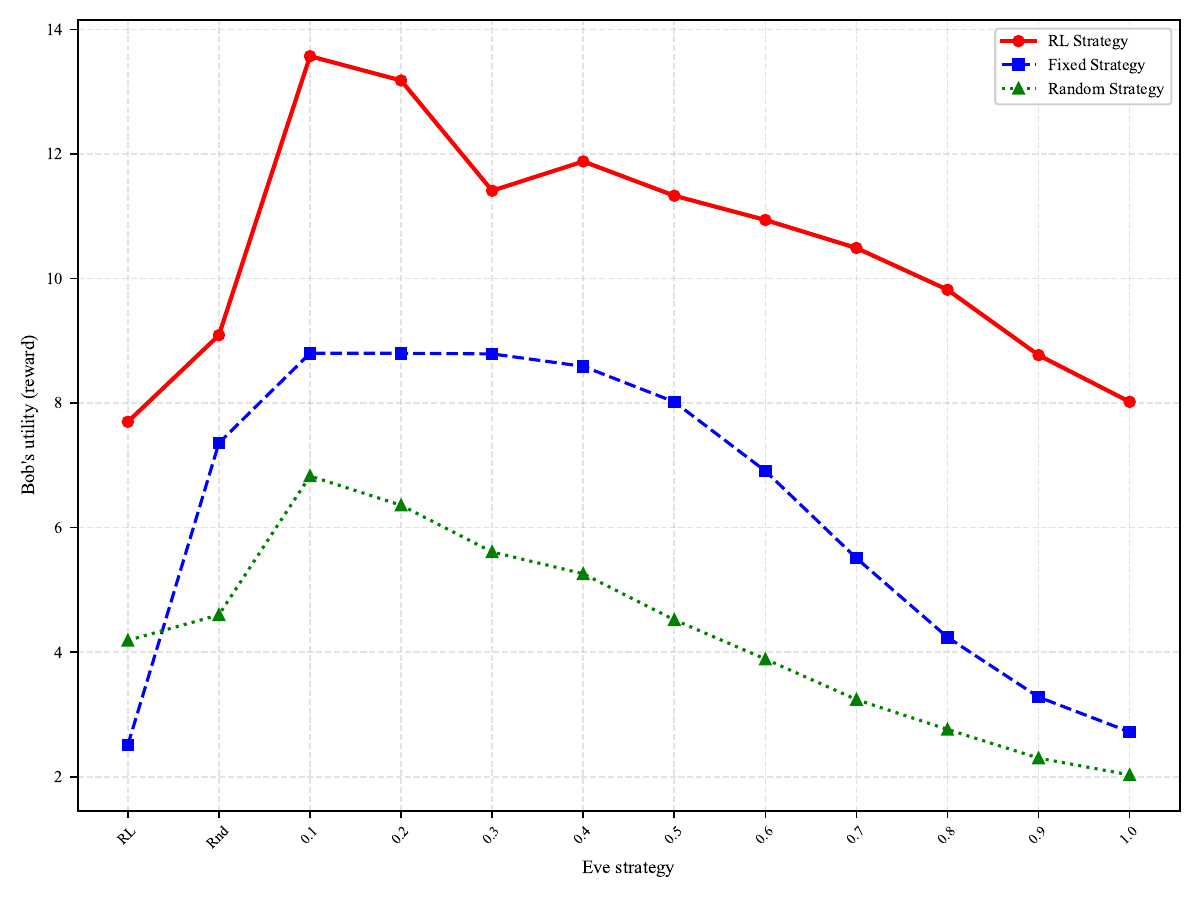}
	\caption{Bob's utility versus Eve's strategy.}
	\label{fig:Utility_eve}
	\vspace{-20pt}
\end{figure}

To quantify the secrecy loss induced by competitive RIS allocation, we compare the proposed MARL-based bidding strategy with an ideal centralized benchmark and two heuristic bidding strategies.
For each evaluation episode, all compared strategies are tested under the same channel realization, while the channel realizations are independently generated across episodes. 
Four representative reference scenarios are considered:
\textbf{1) Ideal Upper Bound:} Bob obtains all $M$ modules without competition, i.e., the ideal physical-layer upper bound under perfect global phase optimization.
\textbf{2) Random:} Bob bids stochastically without deterministic rules.
\textbf{3) Fixed:} Bob uses predetermined bidding ratios $0.1$-$1.0$.
\textbf{4)  RL Auction:} Bob uses the proposed policy, acting on the estimated CSI and the evolving auction state.

For the Random, Fixed, and RL cases, the reported secrecy rate is averaged over a heterogeneous adversary set (Eve~$\in\{$RL, Random, Fixed $0.1$--$1.0\}$), characterizing a highly non-cooperative and adaptive environment. In the trace and CDF results (Figs.~\ref{fig:SA_Trace} and~\ref{fig:SA_CDF}), Fixed curve denotes the mean over Bob's fixed bidding ratios $0.1$-$1.0$.

Fig.~\ref{fig:SA_Trace} depicts the episode-wise secrecy rate trace of the secrecy rate. The proposed RL-based auction consistently tracks the ideal physical-layer upper bound, indicating that the learned policy prioritizes modules with high long-term value under the evolving budget and competitive state, which indirectly improves secrecy performance. In contrast, the fixed strategy fluctuates and degrades because it cannot adapt to time-varying channels and inter-temporal budget constraints, while the random strategy is far inferior with severe fluctuations, indicating unreliable network. Although the RL curve may occasionally dip under budget or adversarial pressure, the sequential-auction framework enables continuous adaptation and thus prevents persistent performance loss.

Fig.~\ref{fig:SA_CDF} further characterizes the statistical performance using the cumulative distribution function (CDF) of the secrecy rate, where a clear dominance of the RL-based strategy is observed. The RL Auction consistently shifts the distribution toward higher secrecy rates, maintaining a range approximately between 2.2 and 3.0 bps/Hz and achieving a median value close to the ideal physical-layer upper bound. In contrast, the random strategy exhibits pronounced outage behavior, with a substantial probability mass concentrated at low secrecy rates, while the fixed strategy improves stability but remains confined to a narrow and suboptimal performance range. Although a mild crossover appears in the upper tail, where random bidding occasionally benefits from favorable channel realizations, the RL approach maintains a significantly higher performance floor, thereby ensuring reliable secrecy even under adverse conditions.

Accordingly, the trace and CDF results confirm that the RL policy improves both the expected secrecy rate and the worst-case performance floor. These gains stem from the bidder's ability to capture the inter-temporal value of RIS modules and to bid in a forward-looking, budget-aware manner, verifying that introducing economic competition into RIS allocation does not compromise physical-layer security.

\subsection{Secrecy Sensitivity to Auction and Budget Constraints}

This subsection investigates the impact of the key parameters of the system and the economy on the secrecy performance under the proposed sequential auction mechanism. Rather than placing emphasis on the robustness of the algorithm, the objective of this analysis is to elucidate the manner in which the scale of resources, the horizon of allocation, and the constraints of the budget jointly shape the resulting performance of security through strategic competition across multiple rounds.

\subsubsection{Impact of Auction Rounds}

Fig.~\ref{fig:Robust_M} shows that the RL strategy reaches its peak secrecy within a small number of auction rounds and retains a clear margin over the baselines thereafter, confirming that a few sequential rounds suffice to establish an effective allocation pattern. Changing the number of rounds also changes the auction granularity in the modularized RIS setting: coarse partitioning limits allocation flexibility, whereas overly fine partitioning fragments both marginal gains and bidding budgets. This exposes a granularity-budget trade-off, within which the RL policy remains the most robust.

\subsubsection{Impact of Budget Constraints}

Fig.~\ref{fig:Robust_Budget} reveals a threshold-and-saturation behavior. At very low budgets, all methods are scarcity-limited and perform poorly, and RL may even trail simpler ones because its strategic advantage cannot be activated under extreme scarcity. Once the budget enters a strategy-sensitive regime, RL improves sharply and dominates the baselines because the bidder must balance current wins against future opportunities. As the budget grows further, all methods become resource-limited and saturate, yet RL retains a consistent advantage, indicating that its superiority stems from selecting high-value configurations rather than from resource abundance alone.

\subsection{Cost-Aware Secrecy Efficiency}

The Bid column of Table~\ref{tab:SA_summary} reports the row-average bid of Bob under different strategies. The RL-based strategy adopts a moderate, adaptive bidding pattern, avoiding both excessive aggressiveness and extreme conservatism, whereas fixed strategies follow rigid rules that either rapidly deplete the budget or fail to compete. On average, RL sustains a moderate bid of $3.55$, far below the aggressive fixed strategies (up to $9.91$) yet sufficient to secure high-value modules, which avoids overbidding and preserves budget for high-impact rounds.

This advantage is quantified by the secrecy-to-bid ratio in Table~\ref{tab:SA_rate_bid_ratio}, where RL attains far higher cost efficiency than every baseline. Fixed strategies suffer from either underbidding, which loses critical modules, or overbidding, which yields diminishing returns, whereas RL sustains a stable efficiency even under intense adversarial pressure. As visualized in Figs.~\ref{fig:SecBid_ratio} and~\ref{fig:SecBid_eve}, Bob's secrecy-to-bid ratio falls from $0.45$ to $0.16$ as its fixed ratio grows, while RL sustains $1.03$ and attains the highest ratio against every considered Eve strategy, confirming that secrecy-to-bid efficiency is maximized by RL under the considered adversarial strategies.

\vspace{-7pt}
\subsection{Economic Performance and RIS Provider Revenue}

The RIS Rev. column of Table~\ref{tab:SA_summary} reports the RIS provider's row-average revenue. Provider revenue generally increases with bidding aggressiveness because stronger competition raises transaction prices. In contrast, the RL-based strategy maintains moderate revenue by avoiding excessive payments and optimizing long-term bidder utility. Aggressive fixed or random bidding may generate higher revenue, but does not necessarily improve secrecy or allocation efficiency. Hence, revenue-maximizing outcomes are not necessarily security-efficient, whereas the RL strategy achieves a better balance between secrecy performance and payment cost.
Fig.~\ref{fig:Utility_eve} further shows Bob's utility under different Eve strategies. 
The RL strategy maintains the highest utility across the considered adversarial strategies, confirming that its secrecy advantage is achieved without excessive payment.

Overall, secrecy performance under the SRA mechanism is jointly shaped by physical-layer module values, budget constraints, and bidding strategies. Neither increasing the auction horizon nor enlarging the budget guarantees monotonic secrecy gains, as both alter the temporal allocation of resources and may introduce diminishing returns. By adapting bids to module valuations, competitive conditions, and remaining budgets, the RL strategy consistently outperforms non-adaptive baselines, demonstrating the importance of strategic bidding in translating RIS resources into secrecy gains.

\section{Conclusion}\label{sec_con}

This paper studies RIS-assisted physical-layer security from a market-oriented perspective, where a legitimate receiver and a strategic eavesdropper compete for limited RIS control rights through sequential auctions.
By integrating state-dependent module valuation with MARL-based bidding, the proposed SRA framework captures budget-coupled strategic interactions and enables forward-looking resource acquisition under incomplete information. 
Simulation results show that the learned strategy improves secrecy performance and bidding utility over non-adaptive baselines, revealing that security in shared RIS markets is jointly determined by physical-layer conditions and economic allocation decisions.

\bibliographystyle{IEEEtran}
\bibliography{reference}

\end{document}